\documentclass[twocolumn,showpacs,preprintnumbers,prl]{revtex4}
\usepackage{amsmath}
\usepackage{amsthm}
\usepackage{graphicx}
\usepackage{dcolumn}
\usepackage{bm}

\setcounter{MaxMatrixCols}{10}

\newtheorem{theorem}{Theorem}

\begin{document}

\title{ Direct Measure of Quantum Correlation}
\author{Chang-shui Yu$^{1,2}$}
\email{quaninformation@sina.com; ycs@dlut.edu.cn}
\author{Haiqing Zhao$^{3}$}
\affiliation{$^{1}$H. H. Wills Physics Laboratory, University of Bristol, Tyndall Avenue,
Bristol, BS8 1TL, United Kingdom}
\affiliation{$^{2}$School of Physics and Optoelectronic Technology, Dalian University of
Technology, Dalian 116024, P. R. China}
\affiliation{$^{3}$School of Science, Dalian Jiaotong University , Dalian 116028, P. R.
China}
\date{\today }

\begin{abstract}
The quantumness of the correlation known as quantum correlation is usually
measured by quantum discord. So far various quantum discords can be roughly
understood as indirect measure by some special discrepancy of two
quantities. We present a direct measure of quantum correlation by revealing
the difference between the structures of classically and quantum correlated
states. Our measure explicitly includes the contributions of the
inseparability and local non-orthogonality of the eigenvectors of a density.
Besides its relatively easy computability, our measure can provide a unified
understanding of quantum correlation of all the present versions.
\end{abstract}

\pacs{03.67.Mn, 03.65.Ud}
\maketitle

\textit{Introduction.-}As one kind of quantum correlation, quantum
entanglement has been playing an important role in quantum information
processing and become a necessary physical resource for quantum
communication and quantum computation [1]. However, a strong evidence shows
that there exists other nonclassical correlation-----quantum discord which
is shown to be effective in characterizing the computational efficiency of deterministic quantum
computation with one qubit [2-6]. Quantum discord captures the
fundamental feature of quantumness of nonclassical correlations, which is
much like quantum entanglement, but it is beyond quantum entanglement
because quantum discord is even present in separable quantum states [7].

Recently, quantum discord has attracted many interests in studying its
behavior under dynamical processes [8,9] and its operational meanings by
connecting it with Maxwell demon [10-12], or some quantum information processes such as broadcasting
of quantum states [10,11], quantum state merging [12,13], quantum
entanglement distillation [,14], entanglement of formation[15] and even the physical nature [16]. However,
as a measure, one can find that the definitions of quantum discord are
always based on indirect methods such as the original definition by the
discrepancy between quantum versions of two classically equivalent
expressions for mutual information and the latter distance-based definitions
[17,18] (including quantum mutual information or relative entropy which can
be considered as a special ''distance''). In addition, one can also find
that the two popular quantum discords are not symmetric if we exchange the
two subsystems, which even shows completely opposite behavior and that
quantum discords in terms of different definitions seem not to be consistent
with each other for some quantum states. In particular, if we compare
quantum correlation with quantum entanglement measure in the sense of that
quantum entanglement describes the inseparability of a quantum state in
mathematics [], a natural question is what the quantum correlation describes.
All above interesting properties require us to deepen our understanding not
only of different quantum discords but also from different angles. In this
paper, we attempt to answer the question mentioned above by giving a direct
definition of the measure of quantum correlation. Based on the features of
the structures of classically correlated states, we find our measure of
quantum correlation explicitly covers the contributions of inseparability
and local non-orthogonality of eigenvectors of a density. It can also
provide a unified understanding of quantum correlation corresponding to
various present quantum discords. In addition, it is obvious that our
measure can be relatively easily calculated.

\textit{Definitions of various classically correlated states.-} Considering
the various definitions of quantum correlation, in particular, the
asymmetric and the symmetric versions of quantum discord, one can
understand, in some angle, the difference comes from the different
definitions of classically correlated states. To our knowledge, fine
definitions of classically correlated states can be found in Ref. [10],
which can be generalized as follows. A multipartite quantum state $\rho
_{ABC\cdots }$ is semi-classically correlated, if it can be written as 
\begin{equation}
\rho _{ABC\cdots }=\sum p_{k}\left[ \otimes _{\alpha }\left| \psi
_{k}^{\alpha }\right\rangle \left\langle \psi _{k}^{\alpha }\right| \right]
\otimes \left[ \otimes _{\beta }\rho _{k}^{\beta }\right] ,
\end{equation}%
where $\left\{ \left| \psi _{k}^{\alpha }\right\rangle \right\} $ is some
orthonormal basis set of subsystem $\alpha $ and $\rho _{k}^{\beta }$ are
the quantum states of subsystem $\beta $, $\sum p_{k}=1,p_{k}>0$. Here we
omit the permutation of subsystems and $\alpha ,\beta =A,B,C,\cdots $. The
strictly classicially correlated states can be defined as 
\begin{equation}
\tilde{\rho}_{ABC\cdots }=\sum_{k}p_{k}\left[ \otimes _{\alpha }\left| \psi
_{k}^{\alpha }\right\rangle \left\langle \psi _{k}^{\alpha }\right| \right] ,
\end{equation}%
where $\left\{ \left| \psi _{k}^{\alpha }\right\rangle \right\} $ and $p_{k}$
are defined analogously.

From the quantum correlation point of view, we say that the quantum states
that cannot be written in Eq. (2) is quantum correlated and the quantum
states that cannot be given by either Eq. (1) or Eq. (2) is strictly quantum
correlated. In addition, if we only emphasize some subsystems, we can say
that Eq. (1) is quantum correlated subject to subsystems $\beta $ and
classically correlated subjected to subsystems $\alpha$.

\textit{Quantum correlation of general bipartite states.-} Just like that
the inseparability of a quantum state provides the direct contribution for
quantum entanglement, in order to give a direct measure of quantum
correlation, we have to find out what is the nature of the quantum
correlation. Let's first consider a general bipartite quantum state with the 
\textit{optimal eigendecomposition} as 
\begin{equation}
\rho _{AB}=\sum\limits_{i=1}^{n}\lambda _{i}\left| \phi _{i}\right\rangle
_{AB}\left\langle \phi _{i}\right|
\end{equation}%
with $n$ being the rank of $\rho _{AB}$. As we know, if a density matrix is
degenerate, the eigendecomposition is not unique. Suppose that the non-zero
eigenvalue $\lambda _{s}$, $0<s\leq n$, is degenerate with $f$ \ being the
degeneracy factor, the eigenvectors corresponding to one degenerate
eigenvalue can be transformed into another group of eigenvectors
corresponding to the same eigenvalue by the unitary transformation which can
be defined by%
\begin{equation}
\Phi U_{f}=\Phi ^{\prime },
\end{equation}%
where the columns of $\Phi $ and $\Phi ^{\prime }$ are all the eigenvectors
corresponding to the same degenerate eigenvalue and $U_{f}$ is an $f\times f$
unitary matrix. Therefore, the \textit{optimal eigendecomposition} as \{$%
\lambda _{i},\left| \phi _{i}\right\rangle $\} means the decomposition with
minimal average entanglement, i.e., $\sum\limits_{s}\frac{1}{f_{s}}%
\sum_{j=1}^{f_{s}}E\left( \left| \phi _{j}^{s}\right\rangle _{AB}\right)
=\min_{U_{f_{s}}}\sum\limits_{s}\frac{1}{f_{s}}\sum_{j=1}^{f_{s}}E\left(
U_{f_{s}}\left| \phi _{j}^{s}\right\rangle _{AB}\right) $. Note that the
optimal eigendecomposition is not unique either.

Comparing each eigenvector in Eq. (3) with that in Eq. (1) and Eq. (2), it
is obvious that the inseparability of the eigenvectors plays an important
role in quantum correlation. In addition, even though all the eigenvectors
are separable, the density could still has non-zero quantum correlation. Now
we can extract the reduced density of each eigenvector corresponding to
different subsystems. A further observation can reveal that the quantum
correlation will vanish, if for each subsystem, all the different (excluding
the same) reduced densities corresponding to different eigenvectors are pure
(which corresponds to the separability of eigenvectors) and orthogonal. The
latter orthogonality is the so called \textit{\ local non-orthogonality } we
mentioned in this paper. Therefore, we can draw a conclusion that the
quantum correlation includes two parts of contributions. One is the
inseparability of the eigenvector $\left\vert \phi _{i}\right\rangle _{AB}$,
and the other is the local non-orthogonality between the reduced densities
of $\left\vert \phi _{i}\right\rangle _{AB}$. We say that the two parts have
the equal contribution to quantum correlation in the sense that only one
part is not enough to distinguish the zero quantum correlation for general
quantum states. So a reasonable frame of the definition of quantum
correlation can be given by  
\begin{gather}
Q(\rho _{AB})=\sum_{i}\lambda _{i}E(\left\vert \phi _{i}\right\rangle _{AB})
\notag \\
+\min_{U_{f}}\left( \sum_{ij}w_{ij}F(\rho _{Ai}\rho _{Aj})+\sum_{ij}\tilde{w}%
_{ij}F(\rho _{Bi}\rho _{Bj})\right) ,
\end{gather}%
where i) $E(\left\vert \phi _{i}\right\rangle _{AB})$ \textit{is some
inseparability measure} which should obviously be given by a good
entanglement measure such as concurrence [19]. This implies that  $E(\left\vert \phi
_{i}\right\rangle _{AB}$ is not changed under local unitary transformations.
$F(\rho _{Ai}\rho _{Aj})$ should be some measure that captures the local
non-orthogonality and $w_{ij}$, $\tilde{w}_{ij}$ should be some weight-like
quantities. Since $F(\rho _{Ai}\rho _{Aj})$ is some measure of
non-orthogonality and  $\rho _{Ai}$ and $\rho _{Aj}$ belong to the same
subsystem,  $F(\rho _{Ai}\rho _{Aj})$ is also invariant under local unitary
operations.In detail, because we consider the eigendecomposition of $\rho
_{AB}$, we have to require ii) $F(\rho _{i}\rho _{j})=0$ for $\rho _{i}=\rho
_{j}$ or $\rho _{i}$\textit{\ orthogonal to }$\rho _{j}$\textit{,} otherwise%
\textit{\ }$F(\cdot )>0$. Thus so long as we can prove that $Q(\rho _{AB})$ is a
good criterion of quantum correlation, we can find that  $Q(\rho _{AB})$
satisfies all the necessary conditions for a quantum correlation measure [20],
which means that the quantum correlation measures are valid.
\begin{theorem}
$Q(\rho _{AB})$ defined in Eq. (5) measures the quantum correlation of $\rho
_{AB}$.
\end{theorem}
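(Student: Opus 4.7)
The plan is to verify the two properties that make $Q(\rho_{AB})$ a bona fide measure in the sense spelled out before the theorem: it should vanish precisely on the strictly classically correlated states of Eq.~(2), and it should be invariant under local unitary operations. Non-negativity is immediate, since by condition (ii) every $F$-term is non-negative and any admissible inseparability measure $E$ in the first sum is also non-negative; the minimization over $U_f$ then preserves non-negativity.

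First I would handle the easy direction: if $\rho_{AB}$ has the form of Eq.~(2), then reading off $\{p_k,|\psi_k^A\rangle|\psi_k^B\rangle\}$ already provides an eigendecomposition in which every eigenvector is a product state, so $E(|\phi_i\rangle_{AB})=0$; this decomposition is automatically optimal since the average entanglement it produces is zero. The local reductions $\rho_{Ai}=|\psi_i^A\rangle\langle\psi_i^A|$ are either equal or mutually orthogonal by orthonormality of the basis (and likewise on side $B$), so condition (ii) kills every $F$-term. Hence $Q(\tilde{\rho}_{AB})=0$.

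The hard direction, $Q(\rho_{AB})=0\Rightarrow\rho_{AB}$ classical, is where the main work lies. Vanishing of the first sum forces each optimal eigenvector with $\lambda_i>0$ to be a product $|a_i\rangle|b_i\rangle$. Vanishing of the minimized second sum then demands that, for some choice of the degeneracy unitary $U_f$, every pair of pure reductions $|a_i\rangle\langle a_i|,|a_j\rangle\langle a_j|$ be either equal or orthogonal, and symmetrically on side $B$. The main obstacle will be turning this pairwise "equal-or-orthogonal" data on both sides into the rigid structure of Eq.~(2): I would cluster the indices by coincident $\rho_{Ai}$, and use the remaining freedom in the degeneracy rotation to show that within a cluster the $|b_i\rangle$'s can be chosen orthonormal, and then run the symmetric argument with $A\leftrightarrow B$. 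Once this block-orthogonal structure is established on both sides, regrouping the terms yields exactly $\sum_k p_k|\psi_k^A\rangle\langle\psi_k^A|\otimes|\psi_k^B\rangle\langle\psi_k^B|$, i.e.\ the form of Eq.~(2).

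Finally, local unitary invariance is essentially automatic: a unitary $U_A\otimes U_B$ maps $|\phi_i\rangle$ to $(U_A\otimes U_B)|\phi_i\rangle$ with the same $\lambda_i$, sends $\rho_{Ai}\mapsto U_A\rho_{Ai}U_A^\dagger$, and by the assumed invariance of $E$ and of $F(\rho_{Ai}\rho_{Aj})$ leaves every summand in Eq.~(5) unchanged; the minimization over degeneracy unitaries is unaffected as well, so $Q$ is local-unitary invariant. Together with the equivalence $Q=0\Leftrightarrow\rho_{AB}$ classically correlated, this gives all the necessary conditions listed in [20], proving that $Q(\rho_{AB})$ is a legitimate measure of quantum correlation.
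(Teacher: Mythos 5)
Your proposal takes essentially the same route as the paper: both directions of the equivalence $Q(\rho_{AB})=0\Leftrightarrow\rho_{AB}$ strictly classically correlated are read off from the structure of the optimal eigendecomposition---the vanishing entanglement term forces product eigenvectors, the vanishing $F$-terms force equal-or-orthogonal local reductions, and conversely a state of the form of Eq.~(2) supplies its own product eigendecomposition with all terms zero. The only differences are cosmetic: you spell out non-negativity and local-unitary invariance while the paper instead appends the pure-state reduction $Q=E$, and your clustering step in the hard direction needs no extra degeneracy rotation, since orthonormality of the product eigenvectors already makes the $\left\vert b_{i}\right\rangle $'s within a cluster orthogonal.
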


\begin{proof}
$Q(\rho _{AB})=0$ means that each term in Eq. (5) has zero value. So that the
first term vanishes implies that $\rho _{AB}=\sum_{ij}\tilde{\sigma}%
_{ij}\left| \tilde{\phi}_{i}\right\rangle _{A}\left\langle \tilde{\phi}%
_{i}\right| \otimes \left| \tilde{\psi}_{j}\right\rangle _{B}\left\langle 
\tilde{\psi}_{j}\right| $ with $\tilde{\sigma}_{ij}$ denoting the eigenvalue
and $\left| \tilde{\phi}_{i}\right\rangle \left| \tilde{\psi}%
_{j}\right\rangle $ corresponding to the eigenvector. That the last two
terms are zero means $\left| \tilde{\phi}_{i}\right\rangle $ and $\left| 
\tilde{\psi}_{j}\right\rangle $ are selected in an orthogonal set,
respectively. Therefore, we can draw a conclusion that the state $\rho _{AB}$
is strictly classically correlated. On the contrary, if $\rho _{AB}$ has no
quantum correlation, $\rho _{AB}$ can be written as $\rho
_{AB}=\sum_{ij}\sigma _{ij}\left| \phi _{i}\right\rangle _{A}\left\langle
\phi _{i}\right| \otimes \left| \psi _{j}\right\rangle _{B}\left\langle \psi
_{j}\right| $, where $\left| \phi _{i}\right\rangle _{A}$ and $\left| \psi
_{j}\right\rangle _{B}$ are chosen from some orthogonal set, respectively.
Thus $\left| \phi _{i}\right\rangle _{A}\left| \psi _{j}\right\rangle _{B}$
must be one group of eigenvectors of $\rho _{AB}$, which are obviously
separable. In other words, $\{\sigma _{ij},\left| \phi _{i}\right\rangle
_{A}\left| \psi _{j}\right\rangle _{B}\}$ is the optimal eigendecomposition.
Therefore, each term in Eq. (5) will vanish, which means $Q(\rho _{AB})=0$.
In addition, one can find that if $\rho _{AB}$ is a pure state, there exists
only one nonzero $\lambda _{i}$. Thus Eq. (5) will be reduced to $Q(\rho
_{AB})=E(\rho _{AB})$. This shows that our quantum correlation is equivalent
to entanglement for pure states.%
\end{proof}

From our Theorem 1, one can find that the quantum correlation is symmetric
if we exchange the two subsystems. However, how is it related to the
previous asymmetric quantum correlation? The answer can be easily found from
our theorem.

\begin{theorem}
Quantum correlation subject to system $A$ can be measured by

\begin{equation}
Q_{QC}(\rho _{AB})=\sum_{i=1}^{n}\lambda _{i}E(\rho
_{i})+\min_{U_{f}}\sum_{i,j=1}^{n}\lambda _{i}\lambda _{j}F(\rho _{Ai}\rho
_{Aj}),
\end{equation}%
and quantum correlation subject to system $B$ can be given by%
\begin{equation}
Q_{CQ}(\rho _{AB})=\sum_{i=1}^{n}\lambda _{i}E(\rho
_{i})+\min_{U_{f}}\sum_{i,j=1}^{n}\lambda _{i}\lambda _{j}F(\rho _{Bi}\rho
_{Bj}),
\end{equation}%
where all the parameters are defined the same as Theorem 1.
\end{theorem}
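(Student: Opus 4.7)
The plan is to adapt the two-direction argument of Theorem 1 to the one-sided setting, verifying that $Q_{QC}(\rho_{AB})=0$ is equivalent to $\rho_{AB}$ being semi-classically correlated with respect to subsystem $A$ in the sense of Eq.\ (1) with $\alpha=A$. The symmetric claim for $Q_{CQ}$ then follows by interchanging the roles of $A$ and $B$. The weights $w_{ij}=\tilde w_{ij}=\lambda_i\lambda_j$ are a canonical choice that reflects the probability of sampling the pair $(|\phi_i\rangle,|\phi_j\rangle)$ in the eigendecomposition (3); their positivity (since the sum runs over the nonzero spectrum) is the only structural feature the proof actually uses.

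For the forward direction, assume $Q_{QC}(\rho_{AB})=0$. Because every $\lambda_i>0$ and $E(\cdot)$ is a valid entanglement measure, the vanishing of $\sum_i\lambda_i E(|\phi_i\rangle_{AB})$ forces each optimal eigenvector to be a product state $|\phi_i\rangle_{AB}=|\alpha_i\rangle_A|\beta_i\rangle_B$, so $\rho_{Ai}=|\alpha_i\rangle\langle\alpha_i|$ is pure. In the minimizing basis, the vanishing of $\sum_{ij}\lambda_i\lambda_j F(\rho_{Ai}\rho_{Aj})$ combined with condition (ii) on $F$ then forces, for every pair $(i,j)$, either $\rho_{Ai}=\rho_{Aj}$ or $\rho_{Ai}\perp\rho_{Aj}$. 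Grouping indices with identical $|\alpha_i\rangle_A$ into classes labelled by $k$, the pure states $\{|\alpha_k\rangle_A\}$ form an orthonormal set, and collapsing the $B$-side within each class gives $\rho_{AB}=\sum_k|\alpha_k\rangle\langle\alpha_k|_A\otimes\rho_k^B$, which is exactly the form (1) with $\alpha=A$. Conversely, starting from a state of this form, diagonalizing each $\rho_k^B$ yields an eigendecomposition of $\rho_{AB}$ in which every eigenvector is a product state; this is clearly optimal (average entanglement zero), so the first term of $Q_{QC}$ vanishes, and by construction the reduced $A$-states are pairwise equal (within a $k$-block) or orthogonal (across $k$-blocks), killing the second term as well.

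The main obstacle is a careful treatment of degeneracies in the reverse direction. When some $p_k$ coincide, the eigendecomposition is non-unique, and one must argue that the product-state eigendecomposition constructed from Eq.\ (1) is a valid representative of the \emph{optimal} eigendecomposition that also minimizes the second term over the residual unitary freedom $U_f$. The point is that separable eigenvectors already saturate the minimum of the first term at zero, so any further unitary rotation within a degenerate block can only increase the first term (or leave it at zero), and among those decompositions still achieving zero entanglement one may choose the block basis in which reduced $A$-states remain orthogonal or identical, driving the second term to zero. This compatibility of the two minimizations is the only nontrivial step; the remainder of the argument, and the parallel proof for $Q_{CQ}$, is a direct specialization of Theorem 1.
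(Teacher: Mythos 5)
Your proof follows essentially the same route as the paper's: the vanishing of the first term forces product eigenvectors (hence pure $\rho_{Ai}$), the vanishing of the second term together with condition (ii) on $F$ forces the reduced $A$-states to be pairwise equal or orthogonal, giving the form of Eq.\ (1) with $\alpha=A$, and conversely diagonalizing each $\rho_k^{B}$ in a state of that form yields a product eigendecomposition that annihilates both terms. The only difference is that you make explicit the grouping of indices into classes and the compatibility of the two minimizations in degenerate blocks, points the paper leaves implicit; the argument is correct.
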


\begin{proof}
If $Q_{QC}(\rho _{AB})=0,$we can get the eigenvectors are separable
and $\rho _{Ai}$ are pure from the first term. We can get $\rho _{Ai}$ is
chosen from an orthogonal set from the second term. But it is not necessary
for $\rho _{Bi}$ to be orthogonal. So $\rho _{AB}$ must have the form of Eq.
(1). If $\rho _{AB}$ can be written as Eq. (1), one can quickly find that $%
Q_{QC}(\rho _{AB})=0$. The similar proof holds for Eq. (6).
\end{proof}

Our Theorem 1 shows an approach to distinguishing a quantum state with
quantum correlation from a strictly classical correlated one. An
intuintional observation can show that we can effectively distinguish a
strictly quantum correlated state from a semi-classically correlated state.
I.e.,

\begin{theorem}
The strictly quantum correlation can be measured by 
\begin{equation}
Q_{S}=\sqrt{Q_{CQ}Q_{QC}}.
\end{equation}
\end{theorem}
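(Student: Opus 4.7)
The plan is to reduce the validity of $Q_S$ as a measure of strictly quantum correlation directly to Theorem 2, exploiting the fact that a geometric mean of non-negative quantities vanishes precisely when one of its factors does. Concretely, I want to establish the equivalence
\begin{equation*}
Q_{S}(\rho _{AB})=0\ \Longleftrightarrow \ \rho _{AB}\text{ has the form (1)},
\end{equation*}
since by the definitions given just after Eq.~(2), a state is \emph{strictly} quantum correlated precisely when it cannot be written as either Eq.~(1) or Eq.~(2), and Eq.~(2) is a special case of Eq.~(1).

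First, I would note that $Q_{CQ},Q_{QC}\geq 0$ by construction (both $E$ and $F$ are non-negative and the $\lambda_i$ are probabilities), so $Q_{S}=\sqrt{Q_{CQ}Q_{QC}}$ is real and non-negative, and $Q_{S}=0$ iff $Q_{CQ}=0$ or $Q_{QC}=0$. Then I would invoke Theorem~2: $Q_{QC}(\rho_{AB})=0$ forces $\rho_{AB}$ into the form of Eq.~(1) with subsystem $A$ being the classical side (the eigenvectors of $\rho_{AB}$ are separable, the $\rho_{Ai}$ are pure and chosen from an orthogonal set, while the $\rho_{Bi}$ may remain arbitrary quantum states), and symmetrically $Q_{CQ}(\rho_{AB})=0$ forces $\rho_{AB}$ into the form of Eq.~(1) with $B$ classical. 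Either conclusion identifies $\rho_{AB}$ as semi-classically correlated, so $Q_{S}=0$ exactly on the set of non-strictly-quantum-correlated states.

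For the converse direction, if $\rho_{AB}$ can be written as in Eq.~(1) (say with $A$ classical), then Theorem~2 gives $Q_{QC}(\rho_{AB})=0$ immediately, hence $Q_{S}=0$. Combining the two directions, $Q_S>0$ characterizes precisely the strictly quantum correlated states, as required. I would also briefly check two consistency properties inherited from $Q_{CQ}$ and $Q_{QC}$: invariance under local unitaries (since each factor is so invariant) and, for a pure state $\rho_{AB}$, the reduction $Q_{CQ}=Q_{QC}=E(\rho_{AB})$, so that $Q_{S}=E(\rho_{AB})$, consistent with the pure-state limit established in Theorem~1.

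There is no substantial analytic obstacle here; the content of the statement is essentially that the geometric mean is the correct symmetric combination that vanishes on the ``classical on at least one side'' set. The only subtlety worth flagging is that symmetry of $Q_{S}$ under exchange of $A$ and $B$ must be verified explicitly, which is immediate since swapping $A$ and $B$ permutes $Q_{CQ}$ and $Q_{QC}$ and hence leaves $\sqrt{Q_{CQ}Q_{QC}}$ unchanged. One might also argue that the \emph{arithmetic} mean would give a symmetric quantity vanishing only when \emph{both} factors vanish, i.e. when the state is strictly classically correlated; the geometric mean is therefore the essential choice that singles out the semi-classical boundary rather than the strictly classical one.
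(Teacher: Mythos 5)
Your proposal is correct and follows essentially the same route as the paper, which justifies Theorem 3 with exactly this observation: since $Q_{CQ},Q_{QC}\geq 0$, the geometric mean vanishes iff one factor does, and by Theorem 2 that happens precisely when $\rho_{AB}$ is semi-classically correlated (Eq.~(1)), i.e.\ not strictly quantum correlated. Your added remarks on symmetry, local-unitary invariance, and the pure-state limit are consistent extras rather than a different argument.
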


It is obvious that, $Q_{S}=0$ implies the quantum state is not strictly
quantum correlated and if a quantum state is semi-classically correlated, $%
Q_{S}=0$.

\textit{Multipartite quantum correlation.-}According to our analysis of
bipartite quantum correlations, one can find that our measures can be easily
generalized to multipartite quantum states. In order to simplify the
description, we first define $\mathcal{F}(\rho _{A},\rho _{B},\cdots
,0,\cdots )=\sum_{i,j=1}^{n}\lambda _{i}\lambda _{j}F(\rho _{Ai}\rho
_{Aj})+\sum_{k,l=1}^{n}\lambda _{k}\lambda _{l}F(\rho _{Bi}\rho
_{Bj})+\cdots $, where $\rho _{xi}=Tr_{\bar{x}}\left\vert \phi
_{i}\right\rangle \left\langle \phi _{i}\right\vert $ with $\lambda _{i}$
and $\left\vert \phi _{i}\right\rangle $ being the \textit{i}th eigenvalue
and eigenvector obtained from the optimal eigendecomposition of multipartite
quantum state $\rho _{ABC\cdots \text{ }}$and $\bar{x}$ denoting all the
subsystems except the subsystem $x$. It is obvious that each term on \textit{%
rhs.} corresponds to the variable $\rho _{xi}$ in $\mathcal{F}(\cdot )$ and
the variable $0$ in $\mathcal{F}(\cdot )$ means that there is no the
corresponding term on \textit{rhs.}. Thus we can claim what follows.

\begin{theorem}
The \textit{N}-partite quantum correlation of $\rho _{ABC\cdots \text{ }}$%
can be given by%
\begin{equation}
Q(\rho _{ABC\cdots \text{ }})=E(\rho _{ABC\cdots })+\min_{U_{f}}\mathcal{F}%
(\rho _{A},\rho _{B},\cdots ),
\end{equation}%
where $E(\cdot )$ is a multipartite entanglement measure with $E(\cdot )=0$
denoting fully separable states. The quantum correlation subject to some
subsystems can be defined by only preserving the corresponding reduced
densities in $\mathcal{F}(\cdot )$, i.e., 
\begin{equation}
Q_{x}(\rho _{ABC\cdots \text{ }})=E(\rho _{ABC\cdots })+\min_{U_{f}}\mathcal{%
F}(\rho _{x}).
\end{equation}%
Note $x$ in Eq. (10) can denote more than one subsystem due to the various
analogous definitions of quantum correlation. The strictly quantum
correlation can be defined as 
\begin{equation}
Q(\rho _{ABC\cdots \text{ }})=\sqrt[N]{\prod\limits_{x=A}^{N}Q_{x}(\rho
_{ABC\cdots \text{ }})}\text{.}
\end{equation}
\end{theorem}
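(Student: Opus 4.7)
The plan is to reduce each of the three assertions in Eqs.~(9)--(11) to the bipartite arguments already developed for Theorems~1--3, exploiting the additive structure of $\mathcal{F}(\cdot)$ and the fact that it is a sum of nonnegative terms whose vanishing can be analyzed summand by summand.

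For Eq.~(9), I would prove both directions of the implicit claim ``$Q=0$ iff $\rho_{ABC\cdots}$ is strictly classically correlated.'' Assume $Q(\rho_{ABC\cdots})=0$. Since both contributions are nonnegative they vanish individually. Vanishing of $E(\rho_{ABC\cdots})$—read, as in Eq.~(5), as the average entanglement over the optimal eigendecomposition—forces every eigenvector to be fully separable, $|\phi_i\rangle=\otimes_\alpha|\phi_i^\alpha\rangle$, so each single-party reduction $\rho_{xi}$ is pure. Vanishing of $\mathcal{F}$ then forces each $F(\rho_{xi}\rho_{xj})=0$, so for every subsystem $x$ the pure states $\{|\phi_i^x\rangle\}$ are pairwise identical or orthogonal, i.e.\ drawn from a common orthonormal set. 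Reassembling the eigendecomposition reproduces exactly the structure of Eq.~(2). Conversely, any state of the form~(2) already admits the orthonormal product vectors $\otimes_\alpha|\psi_k^\alpha\rangle$ as an eigendecomposition with zero entanglement on every eigenvector and with all local reductions mutually orthogonal or identical, so both terms vanish and $Q=0$.

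For Eq.~(10) the argument is parallel, except that $\mathcal{F}$ is restricted to the subsystem(s) labelled by $x$. Vanishing of the entanglement term still forces full separability of the eigenvectors, but only the reductions on $x$ are required to lie in a common orthonormal set; the remaining subsystems $\bar{x}$ may carry arbitrary mixed states $\rho_k^{\bar{x}}$. This matches precisely the generalised form~(1) with $\alpha$ replaced by $x$ and $\beta$ by $\bar{x}$, and the converse is verified by inserting that form directly into the expression. For Eq.~(11), because each $Q_x\ge 0$, the geometric mean vanishes iff at least one factor does, i.e.\ iff $\rho_{ABC\cdots}$ is classically correlated with respect to some subsystem~$x$. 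By the previous step this is exactly the semi-classical family of Eq.~(1); any state outside this family has $Q_x>0$ for every $x$ and therefore $Q_S>0$, certifying strict quantum correlation.

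The main obstacle will be the degenerate case: one has to show that whenever $\rho_{ABC\cdots}$ admits a classically correlated decomposition, there is a single choice of $U_f$ inside each degenerate eigenspace that simultaneously realises the minimum average entanglement and the minimum of $\mathcal{F}$. Equivalently, the optimal eigendecomposition in the sense defined after Eq.~(4) must be compatible with the natural product basis of Eq.~(2). This compatibility is not automatic—the entanglement-minimising unitary need not coincide with the local-orthogonality-restoring one—so the argument should exhibit a single decomposition that witnesses both vanishing conditions at once, rather than treating the two minimisations as independent.
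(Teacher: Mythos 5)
Your proposal carries out precisely what the paper does: the paper disposes of this theorem with the single remark that ``the proof is analogous to those for bipartite quantum states,'' and your reduction of Eqs.~(9)--(11) to the arguments of Theorems~1--3 (separability of the optimal eigenvectors from the $E$ term, orthogonality-or-identity of the relevant local reductions from the $\mathcal{F}$ term, and the geometric mean vanishing iff some $Q_x$ does) is exactly that analogy made explicit. The degeneracy caveat you flag---exhibiting one choice of $U_f$ that witnesses both vanishing conditions simultaneously---is a genuine subtlety, but it is one the paper leaves unaddressed even in its bipartite proofs, so your treatment is, if anything, more careful than the original.
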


The proof is analogous to those for bipartite quantum states.

\textit{Quantum correlation of two qubits as examples.-}The choice of the
functions $E\left( \cdot \right) $ and $F\left( \cdot \right) $ is not
unique. Based on the requirements of our quantum correlation measure, one
can let $E(\cdot )$ be the concurrence (the concurrence corresponding to all
possible bipartite grouping for a multipartite state), and $F(\rho _{i}\rho _{j})=2\bar{F}(1-\bar{F})$
where $\bar{F}(M)=TrM$ being
the fidelity with  $M=\sqrt{\sqrt{%
\rho _{i}}\rho _{j}\sqrt{\rho _{i}}}$. Alternatively, we can also define $E(\left\vert \phi
_{i}\right\rangle _{AB})=S(\rho _{R})$ with $\rho _{r}=Tr_{R}\left\vert \phi
_{i}\right\rangle _{AB}\left\langle \phi _{i}\right\vert $, $R=A$ or $B$ and 
$F(\rho _{i}\rho _{j})=2S(1-S)$ with $S(M)=-TrM\log _{2}M$. In both cases, one can set $w_{ij}=%
\tilde{w}_{ij}=\lambda _{i}\lambda _{j}$ with $\lambda _{i}$ denoting the
eigenvalue of $\rho _{AB}$ given in Eq. (3). Thus we can calculate the
various quantum correlations based on our definitions. However, from the
calculation point of view, one can find that it is quite easy to calculate
our quantum correlation, if a density matrix is not degenerate. On the
contrary, in general our special selection of degenerate eigenvectors
directly leads to the difficulty of the calculation. But we would like to
emphasize that our calculation is much simpler than the previous version of
quantum correlation [3,4] where we need to consider all potential Positive
Operator Value Measurements (POVM). In particular, if the degeneracy is
small, it is very possible to find an analytic expression for our quantum
correlation.

It is very interesting that for bipartite quantum states of qubits, the
above definitions of strictly quantum correlation can be reduced further. Now we can give our results in a rigid way.

\begin{theorem}
If $\rho _{AB}$ is a bipartite quantum state of qubits, then the strictly
quantum correlation can be measured by%
\begin{equation}
Q_{S}(\rho _{AB})=\sum_{i=1}^{n-g}\lambda _{i}C_{i}+\sum_{s=1}^{g}\lambda
_{s}f_{s}C(\rho _{s}),
\end{equation}%
where $\rho _{s}=\frac{1}{f_{s}}\sum_{i=1}^{f_{s}}\left\vert \phi
_{i}\right\rangle _{AB}\left\langle \phi _{i}\right\vert $ with $\left\vert
\phi _{i}\right\rangle _{AB}$ being the degenerate eigenvector subject to
the eigenvalue $\lambda _{s}$ and $C(\rho_s)$ means the concurrence of $\rho_s$. If the non-zero eigenvalues of $\rho _{AB}$
is $n$-fold degenerate with $n$ the rank of $\rho _{AB}$, then $Q_{S}(\rho
_{AB})=C(\rho _{AB}).$
\end{theorem}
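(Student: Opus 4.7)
The plan is to specialize the construction $Q_S = \sqrt{Q_{QC}Q_{CQ}}$ of Theorem~3 to the two-qubit case, where the natural entanglement measure $E(\cdot)$ is Wootters' concurrence $C(\cdot)$, and to show that the result simplifies to Eq.~(12) at the optimal eigendecomposition.

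First I would analyze the entanglement term of $Q_{QC}$ and $Q_{CQ}$ (the first term in Eqs.~(6)--(7)). Non-degenerate eigenvectors $|\phi_i\rangle$ contribute the fixed quantity $\lambda_i C_i$. For each degenerate eigenvalue $\lambda_s$ of multiplicity $f_s$, the unitary freedom $U_{f_s}$ ranges over all orthonormal bases of the $f_s$-dimensional subspace; invoking Wootters' convex-roof formula for concurrence, together with the fact that for two-qubit mixed states of rank at most two the convex-roof minimum is attained on an orthonormal basis, this minimization returns exactly $\lambda_s f_s C(\rho_s)$. Hence the first term at the optimum equals $\sum_{i \text{ non-deg}} \lambda_i C_i + \sum_{s=1}^{g} \lambda_s f_s C(\rho_s)$.

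Next I would argue that at the same optimizer the local non-orthogonality contributions $\sum_{ij}\lambda_i\lambda_j F(\rho_{Ai}\rho_{Aj})$ and its $B$-counterpart also vanish. The key structural observation is that the Wootters ``magic-basis'' decomposition for two qubits can be chosen so that, pairwise within each degenerate block, the one-qubit reduced marginals are either identical (fidelity $\bar{F}=1$, so $F=0$) or supported on mutually orthogonal pure states (fidelity $\bar{F}=0$, so $F=0$). With those terms eliminated, $Q_{QC} = Q_{CQ} = \sum_{i \text{ non-deg}}\lambda_i C_i + \sum_s \lambda_s f_s C(\rho_s)$, and so does their geometric mean $Q_S$, establishing Eq.~(12). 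The special case of $n$-fold degeneracy follows at once: with $g=1$, $f_1=n$, $\lambda_1 = 1/n$ and $\rho_1 = \rho_{AB}$, the non-degenerate sum is empty and the degenerate sum collapses to $(1/n)\cdot n \cdot C(\rho_{AB}) = C(\rho_{AB})$.

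The main obstacle is the rigorous vanishing of the $F$ terms at the Wootters-optimal basis. Justifying it requires using the explicit magic-basis description of the optimal two-qubit decomposition, combined with the Bloch-sphere parametrization of the one-qubit reduced states, to conclude that the same unitary $U_{f_s}$ that minimizes the average concurrence in each degenerate block also pairwise aligns or locally anti-aligns the marginals on both $A$ and $B$. Should a residual $F$ contribution survive in some non-generic configurations, Eq.~(12) would at worst furnish a valid strictly quantum correlation measure in its own right, vanishing precisely on the separable (and hence semi-classical) two-qubit states characterized by Theorems~1--3.
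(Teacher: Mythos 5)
Your proposal rests on reading Theorem 5 as an identity, namely that Eq.~(12) is what $\sqrt{Q_{QC}Q_{CQ}}$ of Theorem~3 reduces to for two qubits, and its central step is the claim that at the concurrence-optimal eigenbasis the local non-orthogonality terms $F(\rho_{Ai}\rho_{Aj})$, $F(\rho_{Bi}\rho_{Bj})$ all vanish because the one-qubit marginals are pairwise identical or orthogonal. That step is false in general, and the paper itself supplies a counterexample: for $\rho=a\left\vert\psi^+\right\rangle\left\langle\psi^+\right\vert+(1-a)\left\vert 11\right\rangle\left\langle 11\right\vert$ at $a=\tfrac12$ every admissible decomposition has fidelity $\bar F=\tfrac{\sqrt2}{2}$, so $F=2\bar F(1-\bar F)>0$ at the optimum; consistently, Fig.~1 plots the Theorem~3 quantity and the Theorem~5 quantity as distinct curves. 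So Eq.~(12) is not obtained by specializing Theorem~3; it is proposed as a separate, simpler functional for two qubits, and what has to be proved is that it is a faithful measure of \emph{strictly} quantum correlation. Your fallback remark gestures at this, but mischaracterizes the zero set: Eq.~(12) vanishes iff some eigendecomposition consists entirely of product vectors, which is neither the set of separable states (a separable state can have entangled eigenvectors) nor obviously the set of semi-classical states without a further argument.

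The missing idea is precisely that further argument, which is where the two-qubit hypothesis enters in the paper's proof: if all eigenvectors in some eigendecomposition are products $\left\vert\alpha_i\right\rangle\left\vert\beta_i\right\rangle$, then because each local space is two-dimensional no third local vector can be simultaneously orthogonal to two mutually orthogonal ones, so at least one side's local vectors must be drawn from an orthonormal set; hence $\rho_{AB}$ has the semi-classical form of Eq.~(1) (or Eq.~(2)), i.e.\ no strictly quantum correlation. Conversely, a state of the form Eq.~(1) admits an eigendecomposition with separable eigenvectors, so Eq.~(12) vanishes; degeneracy is handled by the $U_f$ freedom of Eq.~(4), where the relevant fact (from Wootters, Ref.~[19]) is that within a degenerate block the normalized projector $\rho_s$ is maximally mixed on its support, so an optimal decomposition attaining $C(\rho_s)$ can be taken orthonormal --- your ``rank at most two'' justification is not the right condition here, equal weights within the block is. Your treatment of the fully degenerate case ($g=1$, $f_1=n$, $\lambda_1=1/n$, $\rho_1=\rho_{AB}$, giving $C(\rho_{AB})$) is fine, but it only unpacks the formula; the faithfulness argument above is the substance of the theorem and is absent from your proposal.
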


\begin{proof} If $Q_{S}(\rho _{AB})=0$, then there exists one eigendecomposition
such that all the eigenvectors are separable. All the eigenvectors can be
written in the form of $\left\vert \alpha _{i}\right\rangle \left\vert \beta
_{i}\right\rangle $ with $\lambda _{i}$ the eigenvalues. Because $\rho _{AB}$
is $\left( 2\otimes 2\right) $-dimensional, it is impossible to find a third
vector $\left\vert \alpha _{i}\right\rangle $ ($\left\vert \beta
_{i}\right\rangle $) simultaneously orthogonal to two orthogonal vectors.
Thus at least $\left\vert \alpha _{i}\right\rangle $ or $\left\vert \beta
_{i}\right\rangle $ must belong to some orthogonal set. That is, $\rho
_{AB} $ can be written as one form of Eq. (2) or Eq. (1). Therefore, $%
\rho _{AB}$ has no strictly quantum correlation. On the contrary, if $\rho
_{AB}$ has no striclty quantum correlation, which means that $\rho _{AB}$ can
be in the form of Eq. (1) (Eq. (2) is included). It is obvious that one can always find
such an eigendecomposition that all the eigenvectors are separable.
Therefore, $Q_{S}(\rho _{AB})=0$. Note when $\rho _{AB}$ has degenerate
eigenvalues, the corresponding eigenvectors are not unique. The relation
between different eigenvectors are given in Eq. (4).  It is very interesting that Ref. [19]
shows that one can always find such a $U_{f}$ with $f$ being the degeneracy. The proof is completed.
\end{proof}

\begin{figure}[tbp]
\includegraphics[width=0.8\columnwidth]{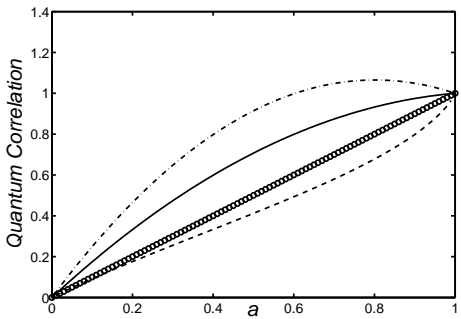}
\caption{(Dimensionless) Quantum correlations based on various definitions versus $a$. The dash-dotted line corresponds to the quantum correlation defined by Theorem 1;  The solid line corresponds to the strictly quantum correlation given by Theorem 3 and meanwhile it also corresponds to the quantum correlation subject A defined by Eq. (6); The "o" line corresponds to the simplified strictly quantum correlation given by Theorem 5; The dashed line is the quantum correlation subject to A given by Ref. [21]}
\label{1}
\end{figure}

As an explicit illustration of our quantum correlation measure, let's consider a class of states defined as
  $\rho=a\left\vert\psi^+\right\rangle\left\langle\psi^+\right\vert
  +(1-a)\left\vert11\right\rangle\left\langle 11\right\vert, 0\leq a\leq 1$, where $\left\vert\psi^+\right\rangle
  =(\left\vert 01\right\rangle+\left\vert 10\right\rangle)/\sqrt{2}$ is a maximally entangled state. The nonzero eigenvalues of $\rho$
  are given by $\lambda_1=a$ and $\lambda_2=1-a$ with the eigenvectors being $\left\vert\phi_1\right\rangle
  =\left\vert\psi^+\right\rangle$ and $\left\vert\phi_2\right\rangle=\left\vert 11\right \rangle$, respectively. It is 
  obvious that the eigenvalues are degenerate for $a=\frac{1}{2}$.  We employ concurrence as entanglement measure, so 
  one can find that for $a=\frac{1}{2}$, any decomposition of $\rho$ has the same average concurrence as $\frac{1}{2}$. In addition,
 it is very interesting that the fidelity $\bar{F}=\frac{\sqrt{2}}{2}$ for all valid decompositions. Therefore, we can easily calculate that the discord defined in Eq. (5) can be given by $\frac{1}{2}+\frac{1}{2}\cdot\frac{1}{2}\left(\frac{\sqrt{2}}{2}-\frac{1}{2}\right)\cdot 2\cdot 2=\frac{\sqrt{2}}{2}$ for $a=\frac{1}{2}$. Thus for all $0\leq a\leq 1$,  one can easily find its quantum correlation, which is plotted in FIG. 1. The quantum correlations with other definitions are also given in this figure. In particular, as a comparison we also plot the quantum correlation analytically given by Ref. [21].

\textit{Conclusions and discussion.-}We have defined a direct quantum
correlation measure which explicitly includes the contributions of the
inseparability and local non-orthogonality of the eigenvectors of a density.
Our measure can provide a unified understanding of quantum correlation of
all the present versions. Our definition can be reduced to quantum
entanglement for pure states, which is a fundamental property of quantum
correlation measure. It is shown that our quantum correlation measure of all
non-degenerate densities can be analytically calculative. In addition, we
have shown that for bipartite quantum states of two qubits, our definitions
can be simplified further by choosing different $F(\cdot )$. In this sense,
whether there exist other $F(\cdot )$ and $E(\cdot )$ which will lead to
simpler expressions deserves our further research. In particular, for
degenerate densities, how to select $F(\cdot )$ and $E(\cdot )$ such that
quantum correlation measure is analytically solved is a very interesting
question. In addition, from our definition of quantum correlation, one can
find that we start with the eigendecomposition of the density, which implies
we focus on the local non-orthogonality ahead of the inseparability. A
parallel consideration is to first address the inseparability. However,
whether the latter method can lead to another (or equivalent) group of
definitions of quantum correlation is another interesting question.

This work was supported by the National Natural Science Foundation of China,
under Grant No. 10805007 and No. 10875020, and the Doctoral Startup
Foundation of Liaoning Province.

\end{document}